\newcommand{\arr}{\longrightarrow}
\renewcommand{\div}{div}
\newcommand{\mx}[1]{\mbox{{#1}}}
\newcommand{\Sym}{{\tt S}}
\DeclareMathOperator{\End}{End}
\DeclareMathOperator{\high}{high}
\DeclareMathOperator{\Hom}{Hom}
\DeclareMathOperator{\low}{low}
\DeclareMathOperator{\tr}{tr}
\def\wt{{\mathrm{wt}}}
\def\CC{{\mathbb{C}}}
\def\ZZ{{\mathbb{Z}}}
\newcommand{\al}{\alpha}
\newcommand{\de}{\delta}
\newcommand{\ph}{\varphi}
\newcommand{\tht}{\theta}
\newcommand{\fg}{\mbox{{\tt g}}}
\newcommand{\sgn}{\mathop{\rm sign}}
\renewcommand{\theequation}%
  {\arabic{section}.\arabic{equation}}
\renewcommand\section%
\newtheorem{proposition}{Proposition}[section]
\newtheorem{Proposition}{Proposition}[section]
\newcommand{\bPr}{\begin{Proposition}}
\newcommand{\ePr}{\end{Proposition}}
\newtheorem{Theorem}[Proposition]{Theorem}
\newcommand{\bTh}{\begin{Theorem}}
\newcommand{\eTh}{\end{Theorem}}
\newtheorem{Lemma}[Proposition]{Lemma}
\newcommand{\bLe}{\begin{Lemma}}
\newcommand{\eLe}{\end{Lemma}}
\newtheorem{Definition}[Proposition]{Definition}
\newcommand{\bDe}{\begin{Definition}}
\newcommand{\eDe}{\end{Definition}}
\newtheorem{Corollary}[Proposition]{Corollary}
\newcommand{\bCo}{\begin{Corollary}}
\newcommand{\eCo}{\end{Corollary}}
\newtheorem{Conjecture}[Proposition]{Conjecture}
\newcommand{\bCj}{\begin{Conjecture}}
\newcommand{\eCj}{\end{Conjecture}}
\theoremstyle{remark}
\newtheorem{remark}[Proposition]{Remark}
\newcommand{\bRe}{\begin{remark}}
\newcommand{\eRe}{\end{remark}}
\newcommand{\bEq}{\begin{equation}}
\newcommand{\eEq}{\end{equation}}
\newcommand{\bEz}{\begin{equation*}}
\newcommand{\eEz}{\end{equation*}}
\newcommand{\bEa}{\begin{eqnarray}}
\newcommand{\eEa}{\end{eqnarray}}
\newcommand{\bEaz}{\begin{eqnarray*}}
\newcommand{\eEaz}{\end{eqnarray*}}
\newcommand{\bAr}{\begin{array}}
\newcommand{\eAr}{\end{array}}
\newcommand{\bN}{\begin{enumerate}}
\newcommand{\eN}{\end{enumerate}}
\newcommand{\bD}{\begin{description}}
\newcommand{\eD}{\end{description}}
\newcommand{\prf}{{\sl Proof.}}
\newcommand{\pth}{{\sl Proof of the theorem}}
\newcommand{\epf}{$\Box$}
\begin{document}


\title[Morphisms of Verma modules over $E (5,10)$.]
{Morphisms of Verma modules over exceptional Lie superalgebra $E (5,10)$.}

\author{Alexei Rudakov}

\begin{abstract}
In this paper we define the degree of a morphism between
(generalized) Verma modules over a graded Lie superalgebra and
construct series of morphisms of various degrees between
(generalized) Verma modules over the exceptional
infinite-dimensional linearly compact simple Lie superalgebra
$E(5,10)$. We prove that all such morphisms of degree 1 are found.
\end{abstract}
\maketitle

{\raggedleft \mbox{{\small {\it To Victor Kac for the birhtday }}}\\}
\section{Introduction.}

The paper continues the study of representations of the exceptional
infinite-dimensional linearly compact simple Lie superalgebra
$E(5,10)$ that has begun in \cite{KR3}. Here we deal with morphisms
between (generalized) Verma modules. We define a degree of such a
morphism and notice  that the morphisms described in \cite{KR3} are
morphisms of degree 1.
In the paper we prove that there are no more morphisms of degree 1.
But we find morphisms of degrees 2 and 3 when considering products
of morphisms of degree 1.

It takes some efforts to find morphisms of degree 4. We study their
properties and use them to construct also some morphisms of degree
5. We show that the picture of complexes of Verma modules over
$E(5,10)$ given in \cite{KR3} extends naturally with the morphisms
of degree 4.

At the moment it is not clear if there are more morphisms of degree
$\geq 3$, but we know that there are no  other morphisms of degree 2. We
will include the proof of this fact in a subsequent paper.



\section{Verma modules.}
\label{sec:1}

We keep the notations from \cite{KR2} (see also \cite{KR1}, \cite{R}).

 Let $L = \oplus_{j \in \ZZ} \fg_j$ be a $\ZZ$-graded Lie
superalgebra by finite-dimensional vector spaces.  Let
\begin{displaymath}
  L_- = \oplus_{j<0}\,\, \fg_j , \,\,\, L_+ = \oplus_{j>0} \,\,\fg_j ,
  \,\,\, L_0 = \fg_0 + L_+ \, .
\end{displaymath}
As usual we denote by $U=U(L)$ the universal enveloping algebra of
$L$ and similarly $U_{0}=U{(L_0)}$,  $U_{-}=U{(L_{-})}$.

Let us notice that the grading extends to the enveloping algebras, in particular
to $U_-$. It is convenient to change the sign of the degree when considering this
grading on $U_-$. We shall call it the natural grading of $U_-$.

Given a $\fg_{{0}}$-module $V$, we extend it to a $L_{{0}}$-module by
letting $L_+$ act trivially, and define the induced $L$-module
\begin{displaymath}
  M(V) = U\otimes_{U_0} V \, .
\end{displaymath}
We shall use the vector space isomorphism
\begin{displaymath}
  M(V) = U_- \otimes_{\CC} V \, .
\end{displaymath}
\bDe Let $V$ be a $\fg$-module.
The $L$-module $M(V)$ is called a (generalized) \emph{Verma module}
(associated to $V$).
When we want to emphasis that $V$ is a finite-dimensional irreducible  $\fg$-module,
we call the Verma module $M(V)$ a  \emph{minimal} Verma module.\\
A minimal Verma module is called \emph{non-degenerate}
if it is irreducible and \emph{degenerate} if it is not irreducible.
\eDe
Let $A$ and $B$ be two $\fg_0$-modules and let $\Hom (A,B)$ be the
$\fg_0$-module of linear maps from $A$ to $B$.  The following
proposition will be used to construct morphisms between the
$L$-modules $M(A)$ and $M(B)$.

\begin{proposition}
\label{prop:1.1}
Let $\varphi:M(A) \to M(B)$  be a morphism of $L$-modules and\\
$\Phi \in U_- \otimes_{\CC}\Hom (A,B)$ be such that $\varphi (1
\otimes a) = \Phi (a)$.
Then
$\Phi$ has the properties:\\
\begin{subequations}\label{eq:1.1}
\begin{eqnarray}
\label{eq:1.1a}
 &\, v\,\Phi (a)=\Phi (u\,a) \text{ for } v \in \fg_{{0}}\, ,\\
 \label{eq:1.1b}
 &\, v\,\Phi (a)=0 \text{ for } v \in L_+ \,.
\end{eqnarray}
\end{subequations}
and the morphism $\varphi$ is determined by the rule
\begin{equation}
\label{eq:1.2}
 \varphi (u \otimes a) =  u\,\Phi (a) \, .
\end{equation}
Moreover for any  $\Phi$ with the properties (\ref{eq:1.1a}),
(\ref{eq:1.1b}) a morphism $\varphi : M(A) \to M(B)$ is uniquely
defined.
\end{proposition}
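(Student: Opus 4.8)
The plan is to prove the two implications separately, using throughout the identity $u\otimes a = u\,(1\otimes a)$ valid in $M(A)=U\otimes_{U_0}A$, the fact that $M(A)$ is generated as an $L$-module by the subspace $1\otimes A$, and the convention that for $\Phi=\sum_i u_i\otimes f_i\in U_-\otimes_{\CC}\Hom(A,B)$ one sets $\Phi(a)=\sum_i u_i\otimes f_i(a)\in U_-\otimes_{\CC}B=M(B)$.

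For the necessity part, I would assume $\varphi$ is an $L$-morphism and compute with $\Phi(a)=\varphi(1\otimes a)$. For $v\in\fg_0\subset U_0$ one has $v\,(1\otimes a)=v\otimes a=1\otimes va$ in $M(A)$, so applying $\varphi$ and $L$-linearity gives $v\Phi(a)=\varphi(1\otimes va)=\Phi(va)$, i.e. (\ref{eq:1.1a}). For $v\in L_+$ one has $v\,(1\otimes a)=1\otimes va=0$ since $L_+$ acts trivially on $A$, hence $v\Phi(a)=\varphi(0)=0$, i.e. (\ref{eq:1.1b}). Finally, for any $u\in U$, $\varphi(u\otimes a)=\varphi(u\,(1\otimes a))=u\,\varphi(1\otimes a)=u\,\Phi(a)$, which is (\ref{eq:1.2}); as such elements span $M(A)$, this also records that $\varphi$ is entirely determined by $\Phi$.

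For the converse, I would start from the obviously $\CC$-bilinear map $U\times A\to M(B)$, $(u,a)\mapsto u\,\Phi(a)$, and show it descends to $M(A)=U\otimes_{U_0}A$. Concretely this means checking $uw\,\Phi(a)=u\,\Phi(wa)$ for all $u\in U$, $w\in U_0$, $a\in A$; applying $u$ reduces this to proving $w\,\Phi(a)=\Phi(wa)$ in $M(B)$ for every $w\in U_0$. I would first establish it for $w=x\in L_0$: if $x\in\fg_0$ it is exactly (\ref{eq:1.1a}), while if $x\in L_+$ then $xa=0$ so $\Phi(xa)=0$ and $x\Phi(a)=0$ by (\ref{eq:1.1b}), and the general case $x\in L_0=\fg_0\oplus L_+$ follows by linearity. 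Then I would propagate to products $w=x_1\cdots x_k\in U_0$ by induction on $k$, via $w\Phi(a)=x_1(x_2\cdots x_k\Phi(a))=x_1\Phi(x_2\cdots x_k a)=\Phi(x_1\cdots x_k a)=\Phi(wa)$, using $U_0$ is spanned by such monomials. This produces a well-defined linear $\varphi:M(A)\to M(B)$ with $\varphi(u\otimes a)=u\Phi(a)$; it is $L$-linear because $\varphi(y(u\otimes a))=(yu)\Phi(a)=y(u\Phi(a))=y\varphi(u\otimes a)$ for $y\in L$, and it satisfies $\varphi(1\otimes a)=\Phi(a)$. Uniqueness is then immediate, since $1\otimes A$ generates $M(A)$ over $L$, so any morphism is pinned down by its restriction to $1\otimes A$.

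The only step I expect to require genuine care is the well-definedness in the converse, i.e. checking that $u\otimes a\mapsto u\Phi(a)$ respects the relations defining $\otimes_{U_0}$. Hypothesis (\ref{eq:1.1b}) is used precisely here: because $L_+\subset L_0$, the balancing element $w$ may carry positive-degree components, and although these kill $A$ they need not kill $M(B)$ a priori, so (\ref{eq:1.1b}) is exactly what is needed to close the argument. The $\ZZ/2$-grading contributes the usual Koszul signs inside $U(L)$ and the module actions, but it plays no distinguished role in any of the above.
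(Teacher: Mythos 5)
Your proof is correct and follows essentially the same route as the paper's: necessity by applying $\varphi$ to $v(1\otimes a)=1\otimes va$, and sufficiency by reducing well-definedness over $\otimes_{U_0}$ to the identity $w\Phi(a)=\Phi(wa)$, checked on $\fg_0$ via (\ref{eq:1.1a}) and on $L_+$ via (\ref{eq:1.1b}). The only difference is that you spell out the induction over monomials $x_1\cdots x_k$ spanning $U_0$, which the paper leaves implicit.
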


\begin{proof}
If $\varphi : M(A) \to M(B)$ is a morphism of $L$-modules
  then for $u \in U (L_0)$,
  \begin{displaymath}
u\,\varphi (1\otimes a)= \varphi (u \otimes a) = \varphi (1 \otimes
u\,a) \, .
  \end{displaymath}
Properties (\ref{eq:1.1a}) and (\ref{eq:1.1b}) follow.

Now given $\Phi \in U_- \otimes_{\CC}\Hom (A,B))$ with the
properties  (\ref{eq:1.1a}),(\ref{eq:1.1b}) we may wish to use
(\ref{eq:1.2}) to define $\varphi$, but in order to conclude that
$\varphi$ is well-defined we have to check for any $u \in U$, $v\in
U_0$ the equality
\begin{displaymath}
\varphi (uv \otimes a)= \varphi (u \otimes v\,a)\, .
\end{displaymath}
Clearly it is equivalent to an equality
\begin{displaymath}
v\,\Phi(a)= \Phi (v\,a)\,,
\end{displaymath}
and it is sufficient to consider cases: $v\in \fg_{{0}}$ and $v\in
L_+$. For the first case the equality is exactly the same as the
property (\ref{eq:1.1a}). In the second case we have $v\,a=0$
because $L_+$ act trivially and we come to the property
(\ref{eq:1.1b}).
\end{proof}

Let us denote the morphism defined in the proposition as $\Phi=\ph|_A$
and call it the restriction of $\ph$.
\bDe\label{d:deg-m}
We say that a morphism $\ph:M(A) \to M(B) $ has  degree $k$ when
\[
\Phi =\ph|_A = \sum_i u_i \otimes \ell_i\,, \text{ where } u_i
\in U_-\,, \,\ell_i \in \Hom (A,B)\,
\]
and $\deg u_i = k $ for all $i$.
\eDe
We shall permit ourselves to denote the morphism of Verma modules and its restriction
by the same letter when it does not lead to confusion.
\begin{remark}
  \label{rem:1.2}
If $L_0$ is generated by $\fg_0$ and a subset of $T \subset L_+$,
then conditions (\ref{eq:1.1}) are equivalent to
\begin{subequations}
\begin{eqnarray}
  \label{eq:1.4a}
\fg_0 \cdot \Phi &=& 0 \, \\
\label{eq:1.4b}
   t \, \Phi (a)&=& 0 \,\,\hbox{ for all }\, t \in T \,,\, a\in A\, .
\end{eqnarray}
\end{subequations}
The "dot" in (\ref{eq:1.4a}) denotes the action of $\fg_0$ on the
tensor product of $\fg_0$-modules $U_-$ and $\Hom (A,B)$. Usually it
gives a hint to a possible choice of $\Phi$ and may be checked by
general invariant-theoretical considerations.  The condition
(\ref{eq:1.4b}) is often more difficult to check.
\end{remark}

\begin{remark}
  \label{rem:1.3}
We can view $M(V)$ also as the induced $(L_- \oplus \fg_0)$-module:
$U(L_- \oplus \fg_0) \otimes_{U(\fg_0)}V$.  Then condition
(\ref{eq:1.4a}) on $\Phi = \sum_i u_i \otimes \ell_i$, where $u_i
\in U(L_- \oplus \fg_0)$, $\ell_i \in \Hom (A,B)$, suffices in order
(\ref{eq:1.2}) to give a well-defined morphism of
$(L_- \oplus \fg_0)$-modules.  One can also replace $\fg_0$ by any of its
subalgebras.
\end{remark}



\section{Lie superalgebra $E(5,10)$.}
\label{sec:2}

Recall some standard notation:
\begin{displaymath}
  W_n = \{ \sum^n_{j=1} P_i (x) \partial_i\,\,
| \,\, P_i \in \CC [[x_{{1}} , \ldots ,x_n ]], \,\,
  \partial_i \equiv \partial / \partial x_i \}
\end{displaymath}
denotes the Lie algebra of formal vector fields in $n$
indeterminates:
\begin{displaymath}
  S_n = \{ D = \sum P_i \partial_i  \, \, |\,\,
  \div D \equiv \sum_i \partial_i P_i =0 \}
\end{displaymath}
denotes the Lie
  subalgebra of divergenceless formal vector fields; $\Omega^k (n)$
  denotes the associative algebra of formal differential forms of
  degree $k$ in $n$ indeterminates, $\Omega^k_{c\ell} (n)$ denoted
  the subspace of closed forms.

The Lie algebra $W_n$ acts on $\Omega^k (n)$ via Lie derivative $D
\to L_D$.  Given $\lambda \in \CC$ one can define the twisted
action:
\begin{displaymath}
  D \,\omega = L_D \,\omega + \lambda (\div D)\, \omega \, .
\end{displaymath}
The $W_n$-module thus obtained is denoted by $\Omega^k
(n)^{\lambda}$.  Recall the following isomorphism of $W_n$-modules
\begin{equation}
  \label{eq:2.2}
  W_n \simeq \Omega^{n-1} (n)^{-1} \, .
\end{equation}
It is obtained by mapping a vector field $D \in W_n$ to the
{\footnotesize{$(n-1)$}}-form $\iota_D (dx_{{1}} \wedge \ldots
\wedge dx_n)$.  Note that (\ref{eq:2.2}) induces an isomorphism of
$S_n$-modules:
\begin{equation}
  \label{eq:2.3}
  S_n \simeq \Omega^{n-1}_{c\ell} (n) \, .
\end{equation}

Recall that the Lie superalgebra $E(5,10) = E(5,10)_{\bar{0}}
\dotplus E (5,10)_{\bar{1}}$ is constructed as follows \cite{K},
\cite{CK}:
\[
E(5,10)_{\bar{0}} = S_5, \quad E(5,10)_{\bar{1}} =
\Omega^2_{c\ell}(5).
\]\medskip
To describe brackets consider an algebra $\widetilde{E}(5,10) =
\widetilde{E}(5,10)_{\bar{0}} \,\dotplus \,\widetilde{E}(5,10)_{\bar{1}}
= W_5 \,\dotplus \,\Omega^2(5)$ where the subalgebra
$ W_5 $ acts on
$\widetilde{E}(5,10)_{\bar{1}}$ via the Lie derivative, but for
$\,\,\omega_2,\omega'_2 \in \widetilde{E}(5,10)_{\bar{1}}$ the
brackets are
\[
\,[\omega_2 , \omega'_2] = \omega_2 \wedge \omega'_2 \in \Omega^4
(5) \stackrel{\sim}{\to} W_5\, \quad \text{(see \,
 (\ref{eq:2.2}) )} \, .
\]
Now $E(5,10)$ is a subalgebra of $\widetilde{E}(5,10)$.
Let as note that $\widetilde{E}(5,10)$ is not a Lie superalgebra, the Jacobi identity
is no longer true in this larger algebra, but true in $E(5,10)$\,.

As in \cite{KR2}  we use for the odd
elements of $E(5,10)$ the notation $d_{ij} = dx_i \wedge dx_j$
$(i,j=1,2,\ldots ,5)$; recall that we have the following
commutation relation $(f,g \in \CC [[ x_1 , \ldots , x_5 ]])$:
\begin{displaymath}
  [f d_{jk}, g \ d_{\ell m}] = \epsilon_{ij k \ell m}\,f g\, \partial_i
  \, ,
\end{displaymath}
where $\epsilon_{ijk\ell m}$ is the sign of the permutation
$ijk\ell m$ if all indices are distinct and $0$ otherwise.

Recall that the Lie superalgebra $L=E(5,10)$ carries a unique
consistent irreducible $\ZZ$-gradation $L = \oplus_{j \geq
  -2}\, \,\fg_j$.  It is defined by:
\begin{displaymath}
  \deg x_i =2 =-\deg \partial_i \, , \,\, \deg d_{ij} =-1 \, .
\end{displaymath}
One has:  $\fg_0 \simeq s\ell_5 (\CC)$ and the $\fg_0$-modules
occurring in the $L_-$ part are:
\begin{eqnarray*}
  \fg_{-1} &=& \langle d_{ij} \,\, | i,j=1,\ldots ,5 \rangle \simeq
     \Lambda^2 \CC^5 \, ,\\
     fg_{-2} &=& \langle \partial_i \,\, | i=1,\ldots ,5 \rangle
        \simeq \CC^{5*} \, .
\end{eqnarray*}
Recall also that $\fg_1$ consist of closed $2$-forms with linear
coefficients, that $\fg_1$ is an irreducible $\fg_0$-module and $\fg_j
 = [\,\fg_1\,[\ldots ]]= \fg_1^j$ for $j \geq 1$.

%


\section{Degenerate Verma modules and morphisms of degree 1.}
\label{sec:4}

We take for the Borel subalgebra of $\fg_0\simeq s\ell_5$ the subalgebra of the
vector fields
\[
\{ \sum_{i \leq j} a_{ij} (x_i \partial_j)\, |\,
a_{ij} \in \CC \, , \, \tr (a_{ij})=0 \} =
\langle    x_i \partial_j , \,\,{i \leq j}, \, h_i=x_i \partial_i - x_{i+1}\partial_{i+1}
 \rangle\,.
\]
We denote by
$F(n_1,n_2,n_3,n_4)$ the finite-dimensional irreducible
$\fg_0$-module with highest weight $(n_1,n_2,n_3,n_4)$.  Let
\begin{displaymath}
  M(n_1,n_2,n_3,n_4) = M(F(n_1,n_2,n_3,n_4))
\end{displaymath}
denote the corresponding generalized Verma module over $E(5,10)$.

Let us repeat a conjecture from \cite{KR3} (where it contains a misprint)
\begin{Conjecture}
  \label{conj:4.1}
The following is a complete list of degenerate Verma modules
over $E(5,10)$:
\begin{displaymath}
  M(m,n,0,0) , \,\, M(m,0,0,n), \,\hbox{ and }  \, M(0,0,m,n)\quad (\text{for any } m,n \in \ZZ_+).
\end{displaymath}
\end{Conjecture}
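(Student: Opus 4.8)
\medskip
\noindent\emph{A proof strategy.}\enspace
The conjecture has two halves. The ``positive'' half --- that each of $M(m,n,0,0)$, $M(m,0,0,n)$, $M(0,0,m,n)$ is degenerate --- is what the constructions of this paper, together with those of \cite{KR3}, are meant to deliver: by Proposition~\ref{prop:1.1}, a nonzero morphism $\ph\colon M(W)\to M(V)$ of positive degree $k$ (Definition~\ref{d:deg-m}) has image a nonzero proper $L$-submodule of $M(V)$, proper because $\ph|_W=\sum_i u_i\otimes\ell_i$ with $\deg u_i=k>0$ forces $\IM\ph$ into the part of $M(V)$ of natural degree $\ge k$, which misses $1\otimes V$. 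So to prove this half one needs only, for each of the three families, a single Verma module mapping nontrivially into it; assembling the morphisms of degrees $1$ through $5$ constructed below (and in \cite{KR3}) into the complexes of Verma modules does exactly that, and I expect the verification to be finite bookkeeping.

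The substance of the conjecture is the converse: if $(n_1,n_2,n_3,n_4)$ has none of the three shapes, then $M(n_1,n_2,n_3,n_4)$ is irreducible. I would start from the standard reduction of reducibility to the existence of a \emph{singular vector}. Since $\fg_j=\fg_1^j$ for $j\ge1$, the subalgebra $L_+$ is generated by $\fg_1$; passing to the bottom homogeneous component of a graded proper submodule then shows that $M(V)$ is reducible if and only if $U_-\otimes V$ contains a nonzero $\fg_0$-highest-weight vector of positive natural degree that is annihilated by $\fg_1$. By Proposition~\ref{prop:1.1} and Remark~\ref{rem:1.2}, such a vector is the same datum as a nonzero $\Phi\in U_-\otimes\Hom(W,V)$ with $\fg_0\cdot\Phi=0$ and $t\,\Phi(a)=0$ for all $t\in\fg_1$, $a\in W$. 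So the converse says precisely that no Verma module outside the three families receives a morphism of \emph{any} degree.

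I would attack this in two stages. First, a numerical necessary condition: a degree-$k$ morphism $M(W)\to M(V)$ forces the $s\ell_5$-highest weight of $W$ to differ from that of $V$ by a weight occurring in the degree-$k$ component $U_{-,k}$ of $U_-$ (natural grading), and equating on source and target the value of the natural degree together with the quadratic Casimir of $\fg_0\simeq s\ell_5$ --- the ``conformal weight'' attached to $E(5,10)$ --- yields a polynomial ``resonance'' identity relating $k$, the $n_i$, and the weight shift; this already cuts the candidates to finitely many explicit families. Second, an elimination step: for small $k$ one computes the space of $\fg_1$-invariants in $U_{-,k}\otimes\Hom(W,V)$ directly --- this is the analysis of degrees up to $5$ carried out in the body of the paper --- while for large $k$ one needs an a priori bound on the natural degree of a singular vector of $M(n_1,n_2,n_3,n_4)$ in terms of $n_1+n_2+n_3+n_4$, after which only finitely many cases survive.

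The main obstacle is this last bound. As the paper itself remarks, it is not presently known whether $E(5,10)$ has morphisms of degree $\ge3$ beyond those exhibited here, so the converse --- and with it the conjecture --- remains open. Removing the obstacle would seem to require a genuinely new structural input: for example an exactness or resolution property of the functor $V\mapsto M(V)$ in the spirit of Bernstein--Gelfand--Gelfand, or a sharper use of the contragredient duality on Verma modules, either of which might serve to bound or even to classify the singular vectors of arbitrary degree.
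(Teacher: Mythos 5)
You have correctly diagnosed the situation: the statement is a Conjecture, and the paper does not prove it. What the paper actually establishes is (i) the forward direction --- every module on the list is degenerate --- via the degree-$1$ morphisms of Theorem~\ref{th:4.1}, using exactly the observation you make, namely that a nonzero morphism of positive degree into $M(V)$ has image a nonzero submodule that misses $1\otimes V$ and is therefore proper; and (ii) Theorem~\ref{th:4.2}, which classifies all degree-$1$ morphisms and hence rules out further degenerate modules only insofar as they would have to be detected in degree $1$. Your reduction of reducibility to a $\fg_0$-highest singular vector annihilated by $\fg_1$ (using $\fg_j=\fg_1^j$), and your two-stage plan, is the standard Kac--Rudakov skeleton; but the first stage is shakier than you suggest ($E(5,10)$ is not known to carry a Casimir-type operator acting on $M(V)$ that would yield a ``resonance'' identity, and no analogue of the linkage principle is available here), and the second stage founders, as you say yourself, on the absence of an a priori bound on the degree of a singular vector. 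So your proposal is not a proof and correctly does not claim to be one; as an account of what is known and what is missing it matches the paper.

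One concrete soft spot even in the forward half as you (and the paper) state it: for ``degenerate because it receives a nonzero positive-degree morphism'' to cover the whole list, every listed module must actually occur as a \emph{target}. From Theorem~\ref{th:4.2} the targets in degree $1$ are $M(m,n,0,0)$ with $m,n\ge 0$, $M(m,0,0,n)$ with $n\ge 1$, and $M(0,0,m,n)$ with $n\ge 1$; the modules $M(0,0,m,0)$ with $m\ge 2$ are not targets of any degree-$1$ morphism, nor of any of the higher-degree morphisms of the later sections (only $M(0,0,1,0)$ is reached, by $t'$ of Proposition~\ref{pr:deg5}). Being the \emph{source} of a nonzero $\triangledown_C$ does not by itself produce a proper nonzero submodule of $M(0,0,m,0)$: if that module were irreducible, $\triangledown_C$ would simply embed it into $M(0,0,m,1)$. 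So these cases need either a separate construction, the contragredient-duality argument you allude to (duality exchanges the $A$ and $C$ families and would transport degeneracy of $M(m,0,0,0)$, which \emph{is} a target of $\triangledown_A$ and of $t_{AB}$), or an appeal to \cite{KR3}; your sketch, like the paper, passes over this point.
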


In this section we construct three series of morphisms of degree one of
Verma modules which shows, in particular, that all
modules from the list given by Conjecture~\ref{conj:4.1} are indeed
degenerate.

We let (see \cite{KR3}, but our notations differ slightly):
\begin{displaymath}
  S_A = S (\CC^5 + \Lambda^2 \CC^5) \, ,\,
  S_B = S (\CC^5 + \CC^{5*}) \, , \,
  S_C = S(\CC^{5*} + \Lambda^2 \CC^{5*}) \,.
\end{displaymath}
Denote by $z_i$ $(i=1,\ldots ,5)$ the standard basis of $\CC^5$
and by $x_{ij}=-x_{ji}$ $(i,j=1, \ldots ,5)$ the standard
basis of $\Lambda^2 \CC^5$.
Let $z^*_i$ and $x^*_{ij} =-x^*_{ji}$
be the dual bases of $\CC^{5*}$ and $\Lambda^2\CC^{5*}$,
respectively.  Then $S_A$ is the polynomial algebra in
$15$~indeterminates $x_i$ and $x_{ij}$, $S_B$ is the polynomial
algebra in $15$~indeterminates $z^*_i$ and $x^*_{ij}$ and $S_C$
is the polynomial algebra in $10$~indeterminates $z_i$ and
$z^*_i$.

Given two irreducible $\fg_0$-modules $E$ and $F$, we denote by $(E
\otimes F)_{\high}$ the highest irreducible component of the
$\fg_0$-module $E \otimes F$.  If $E = \oplus_i E_i$ and $F=
\oplus_jF_j$ are direct sums of irreducible $\fg_0$-modules, we let
$(E \otimes F)_{\high} = \oplus_{i,j} (E_i \otimes F_j)_{\high}$.
If $E$ and $F$ are again irreducible $\fg_0$-modules, then $S (E
\oplus F)= \oplus_{m,n \in \ZZ_+} S^m E \otimes S^nF$, and we let
$S_{\high} (E \oplus F) = \oplus_{m,n \in \ZZ_+} (S^m E \otimes
S^nF)_{\high}$.  We also denote by $S_{\low} (E \oplus F)$ the
$\fg_0$-invariant
complement to $S_{\high}(E \oplus F)$.

It is easy to see that we have as $\fg_0$-modules:
\begin{eqnarray*}
  S_{A,\high} & \simeq & \oplus_{m,n \in \ZZ_+}
      F (m,n,0,0) \, , \\
  S_{B,\high}  & \simeq &\oplus_{m,n \in \ZZ_+}
       F (m,0,0,n) \,, \\
  S_{C,\high}  & \simeq & \oplus_{m,n \in \ZZ_+}
       F(0,0,m,n) \,.
\end{eqnarray*}
We are going to construct morphisms  $\triangledown_X \in \End_L(M(S_{X,\high}))$ of
degree 1
for  $X=A$, $B$ or $C$.
Let us start with the following operators:
\begin{equation}\label{eq:4-tr}
  \triangledown_X = \sum^5_{i,j=1} d_{ij} \otimes \theta^X_{ij}
  \, ,
\end{equation}
where $ \theta^X_{ij}\in \Hom(S_X,\,S_X)$, $X=A$, $B$ or $C$. Namely we take
\begin{displaymath}
  \theta^A_{ij} = \frac{d}{dx_{ij}} \, ,\quad
  \theta^B_{ij} =z^*_i \frac{d}{dz_j}-z^*_j\frac{d}{dz_i} \,, \quad
  \theta^C_{ij} = x^*_{ij} \, .
\end{displaymath}
In order to get the morphisms of $S_{X,\high}$ we need following more
specific realizations of these spaces.
\bPr\label{prop:4.1c}
The $\fg_0$-module $S_{C,\high}$ is equal to a factor of the polynomial ring $S_C$
by the ideal $S_{C,\low}$ generated by
the relations:
\begin{subequations}
\begin{eqnarray}
    x^*_{ab} x^*_{cd}
      -   x^*_{ac} x^*_{bd}
      +   x^*_{ad} x^*_{bc}
&=0 \, \text{ for $a,b,c,d=1,\ldots,5$}\,,\\
    x^*_{ab} z^*_{c}
      -   x^*_{ac} z^*_{b}\label{eq:4.1c}
      +   x^*_{bc} z^*_{a}
 &=0 \, \text{ for $a,b,c=1,\ldots,5$}\,.
\end{eqnarray}
\end{subequations}
\ePr
This follows from the fact that the orbit of the sum of highest weight vectors in
$F(0,0,0,1)\oplus F(0,0,1,0)$ is a spherical variety.
\bPr\label{prop:4.1b} The $\fg_0$-module
$S_{B,\high}$ is equal to a factor of the polynomial ring $S_B$ by
the ideal $S_{B,\low}$ generated by the relation:
\begin{eqnarray}
  \label{eq:4.1b}
   \sum_{i=1,..,5} z_{i} z^*_{i}
       &=0 \, .
\end{eqnarray}
\ePr
Similarly we use the fact that the orbit of the sum of highest weight vectors in
$F(1,0,0,0)\oplus F(0,0,0,1)$ is a spherical variety.
\bPr\label{prop:4.1a}
The $\fg_0$-module $S_{A,\high}$ is equal to the subspace of all polynomials
 $f\in S_{A}$ that satisfy the following equations:
\begin{subequations}\label{eq:4.1A}
\begin{eqnarray}
  \label{eq:4.1a}
\left( \frac{d}{dx_{ab}}\frac{d}{dx_{cd}}
     - \frac{d}{dx_{ac}}\frac{d}{dx_{bd}}
     + \frac{d}{dx_{ad}}\frac{d}{dx_{bc}}
\right) f=0&\,  \text{ for $a,b,c,d=1,\ldots,5$}\,,\\   
\left( \frac{d}{dx_{ab}}\frac{d}{dz_{c}}
     - \frac{d}{dx_{ac}}\frac{d}{dz_{b}}
     + \frac{d}{dx_{bc}}\frac{d}{dz_{a}}
\right) f=0&\,  \text{ for $a,b,c=1,\ldots,5$\,.}
\end{eqnarray}
\end{subequations}
\ePr
We notice that the realization of $S_C$ as differential
operators on $S_A$ makes the perfect duality.
The proposition means that $S_{A,\high}$ coincides with the orthogonal complement
to $S_{C,\low}$, which is clear because $S_{A,\high}$ is dual to $S_{C,\high}$.

\bCo\label{cor:4.1}
$(a)$ The operator $\triangledown_A$ preserves the subspace $S_{A,\high}$ of $S_{A}$ and
therefore defines a map $\triangledown_A \in \End(S_{A,\high})$.\\
$(b)$ The operator $\triangledown_B$ preserves the ideal $S_{B,\low}$ of $S_{B}$ and
thus determines a map of the factor rings $\triangledown_B \in \End(S_{B,\high})$.\\
$(c)$ The operator $\triangledown_C$ preserves the ideal $S_{C,\low}$ of $S_{C}$ and
determines a map of the factor rings $\triangledown_C \in \End(S_{C,\high})$.
\eCo
The statements $(a)$ and $(c)$  are evident. For $(b)$ we notice that
$\triangledown_B$ acts as a derivative, thus it is enough to check that it annihilates the
generator of the ideal $S_{B,\low}$ given by (\ref{eq:4.1b}), which is
straightforward.
\medskip

To keep with the notations in \cite{KR3} we let:
\begin{displaymath}
  V_A = S_{A,\high} \, , \quad V_{B (\hbox{resp. }C)} =
  S_{B (\hbox{resp. }C)}/ S_{B(\hbox{resp. }C){,\low}}\simeq S_{B (\hbox{resp. }C){,\high}}.
\end{displaymath}
\begin{Theorem}
  \label{th:4.1}
$(a)$~~The operators $\triangledown_X$ define $E(5,10)$-morphisms
$M(V_X) \to M(V_X) \,\,\\ (X=A,B \hbox{ or } C)$.

$(b)$~~Morphism $\triangledown_X$ restricted to $M=M(n_1,n_2,n_3,n_4)$ is trivial
iff $X=A$ and \\
$M=M(m,0,0,0)$,  or $X=B$ and $M=M(0,0,0,n)$.

$(c)$~~The non-zero morphisms $\triangledown_X$ are morphisms of degree 1.
\end{Theorem}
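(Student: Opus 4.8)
The plan is to verify that each $\triangledown_X$ satisfies the two conditions of Proposition~\ref{prop:1.1} (in the reformulated shape of Remark~\ref{rem:1.2}), namely that $\fg_0 \cdot \triangledown_X = 0$ and that the top-degree generators of $L_+$ annihilate $\triangledown_X(a)$ for every $a \in V_X$. Here $\fg_0 \simeq s\ell_5$ and, by the last paragraph of Section~\ref{sec:2}, $L_+$ is generated by $\fg_1$ (closed $2$-forms with linear coefficients), so in Remark~\ref{rem:1.2} we take $T$ to be a set of generators of $\fg_1$ as a $\fg_0$-module; since $\fg_1$ is an irreducible $\fg_0$-module it suffices to check annihilation by a single highest-weight vector of $\fg_1$ together with $\fg_0$-equivariance, or, more concretely, to check it for all the forms $x_k d_{\ell m}$. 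Part $(a)$ is exactly these two verifications; part $(c)$ is then immediate from Definition~\ref{d:deg-m}, because in \eqref{eq:4-tr} every $u_i = d_{ij}$ has natural degree $1$ in $U_-$, so a non-zero $\triangledown_X$ is by definition a morphism of degree~$1$; and part $(b)$ is the separate computation identifying on which irreducible summand $F(n_1,n_2,n_3,n_4)$ of $V_X$ the operator acts by zero.

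For the $\fg_0$-equivariance \eqref{eq:1.4a}: the element $\sum_{i,j} d_{ij}\otimes\theta^X_{ij}$ should be recognized as a canonical contraction. The space $\fg_{-1}\simeq\Lambda^2\CC^5$ is spanned by the $d_{ij}$ with the $sl_5$-module structure $d_{ij}\leftrightarrow e_i\wedge e_j$; one checks that the assignment $e_i\wedge e_j\mapsto\theta^X_{ij}$ defines a $\fg_0$-module map $\Lambda^2\CC^5\to\End(S_X)$ for each $X$ — for $X=C$ because $x^*_{ij}$ transforms as $e^*_i\wedge e^*_j$ and multiplication is equivariant (wait: here one uses that $\Lambda^2\CC^5$ and $\Lambda^2\CC^{5*}$ pair, so the natural map is into $\End$); for $X=A$ because $d/dx_{ij}$ lowers by $\Lambda^2\CC^{5*}$; for $X=B$ because $z^*_i d/dz_j - z^*_j d/dz_i$ is the standard realization of $\Lambda^2\CC^5\hookrightarrow sl_5\subset\End(S_B)$. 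Then $\triangledown_X$ is the image of the canonical invariant element of $(\Lambda^2\CC^5)^*\otimes(\Lambda^2\CC^5)\subset \fg_{-1}^*\otimes\End(S_X)$, hence $\fg_0$-invariant as an element of $U_-\otimes\Hom(V_X,V_X)$. This part is conceptual and short.

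The main obstacle is condition \eqref{eq:1.4b}: showing $t\cdot\triangledown_X(a)=0$ in $M(V_X)$ for $t\in\fg_1$. Writing $t = \sum x_k d_{\ell m}$ (a closed form) and $\triangledown_X(a)=\sum_{i,j}d_{ij}\otimes\theta^X_{ij}(a)$, one must expand $t\cdot\big(d_{ij}\otimes\theta^X_{ij}(a)\big)$ using the bracket $[x_k d_{\ell m}, f\,d_{ij}] = \epsilon_{\,p\,\ell m i j}\,x_k f\,\partial_p$ and the fact that $x_k d_{\ell m}$ acts on $V_X$ (sitting in degree $0$) by zero since $\fg_1\subset L_+$ acts trivially on the generating subspace. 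The result is a sum over the index $p$ of terms $\partial_p\otimes(\pm\,x_k\,\theta^X_{ij}(a))$; collecting the coefficient of each $\partial_p$, the vanishing becomes precisely a linear identity among the second-order expressions $\theta^X_{i j}\theta^X_{i' j'}$ — which, by construction of the $\theta^X$, is either the Plücker-type quadratic relation \eqref{eq:4.1a} defining $S_{A,\high}$ (for $X=A$, where $a$ is constrained to lie in $V_A$), or the relations \eqref{eq:4.1c}, \eqref{eq:4.1b} generating $S_{C,\low}$, $S_{B,\low}$ (for $X=C,B$, where the output already lies in the quotient). The bookkeeping of signs $\epsilon_{p\ell m ij}$ against the antisymmetry of $\theta^X_{ij}$ is the one genuinely fiddly step; everything else reduces to quoting Propositions~\ref{prop:4.1a}, \ref{prop:4.1b}, \ref{prop:4.1c} and Corollary~\ref{cor:4.1}. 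Finally, for $(b)$, one locates the lowest-degree component $F(m,0,0,0)\subset V_A$ (resp.\ $F(0,0,0,n)\subset V_B$), observes it is spanned by polynomials in the $x_i$ alone (resp.\ $z^*_i$ alone) on which every $\theta^A_{ij}=d/dx_{ij}$ (resp.\ every $\theta^B_{ij}$, which involves only the $z_i$) vanishes, and conversely checks on the remaining summands that some $\theta^X_{ij}(a)\neq 0$, so $\triangledown_X|_M$ is non-trivial there; the $X=C$ case is non-trivial on every summand because the $\theta^C_{ij}=x^*_{ij}$ are injective multiplications on $V_C$.
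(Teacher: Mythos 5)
Your proposal is correct and follows essentially the same route as the paper: check $\fg_0\cdot\triangledown_X=0$ by invariance, then verify the $\fg_1$-annihilation condition of Remark~\ref{rem:1.2} on a single generator $x_k d_{\ell m}$ of the irreducible module $\fg_1$, reducing the resulting alternating sum to the quadratic relations of Propositions~\ref{prop:4.1a}--\ref{prop:4.1c}. The only minor discrepancy is that for $X=B$ the alternating sum actually vanishes identically (not merely modulo the relation \eqref{eq:4.1b}), but your weaker claim still suffices since the target is the quotient by $S_{B,\low}$.
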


\prf\/
It is immediate to see that
$\fg_0 \cdot \triangledown_X =0$. By Remark~\ref{rem:1.3} we conclude that
there exist the corresponding $(U_-\oplus\fg_0)$-morphisms $M(S_X) \arr M(S_X)$ that we
permit ourselves to denote by the same symbols $\triangledown_X$.
These are evidently morphisms of degree 1.

In order to apply Proposition~\ref{prop:1.1} and get $E(5,10)$-morphisms
of modules $S_{X,\high}$
we need to check that
\begin{equation}
  \label{eq:4-check}
  \fg_1 \cdot \triangledown_X (s)=0 \, \text{ for $s\in S_{X,\high}$}.
\end{equation}
Now in order to check (\ref{eq:4-check}) we may use Remark~\ref{rem:1.2}
with $t=x_5d_{45}$. Namely
\bEq\label{eq:4.2}
\begin{aligned}
(x_5d_{45})&\sum^5_{i,j=1} d_{ij} \otimes \theta^X_{ij}(s) =\\
&=[x_5d_{45},\,d_{12}]\otimes \theta^X_{12}(s)+
[x_5d_{45},\,d_{13}]\otimes \theta^X_{13}(s)+
[x_5d_{45},\,d_{23}]\otimes \theta^X_{23}(s)+0\\
&=(x_5\partial_3)\otimes \theta^X_{12}(s)-
(x_5\partial_2)\otimes \theta^X_{13}(s)+
(x_5\partial_1)\otimes \theta^X_{23}(s)\,.
\end{aligned}
\eEq
It is not difficult to check that for $X=A$ the right
hand side is equal to zero modulo
relations (\ref{eq:4.1A}).

Now let us consider $X=B$. Here for $s\in S_B$
\begin{displaymath}
(x_5\partial_c)\theta^B_{ab}(s)=
\left(z_5z^*_a \frac{d}{dz_b}\frac{d}{dz_c}-
z^*_az^*_c\frac{d}{dz_b}\frac{d}{dz^*_5}
-z_5z^*_b \frac{d}{dz_a}\frac{d}{dz_c}+
z^*_bz^*_c\frac{d}{dz_a}\frac{d}{dz^*_5}\right)(s)\,.
\end{displaymath}
This immediately gives us zero at the right hand side of (\ref{eq:4.2}).

When $X=C$ we have
\begin{displaymath}
(x_5\partial_c)\theta^B_{ab}(s)=
-z^*_c x^*_{ab}\frac{d}{dz^*_5}(s)\,,
\end{displaymath}
thus the right hand side of (\ref{eq:4.2}) is zero modulo the relations
(\ref{eq:4.1c}).\epf\\
The non-zero maps $\triangledown_X$ are illustrated in
Figure~\ref{fig:2}.


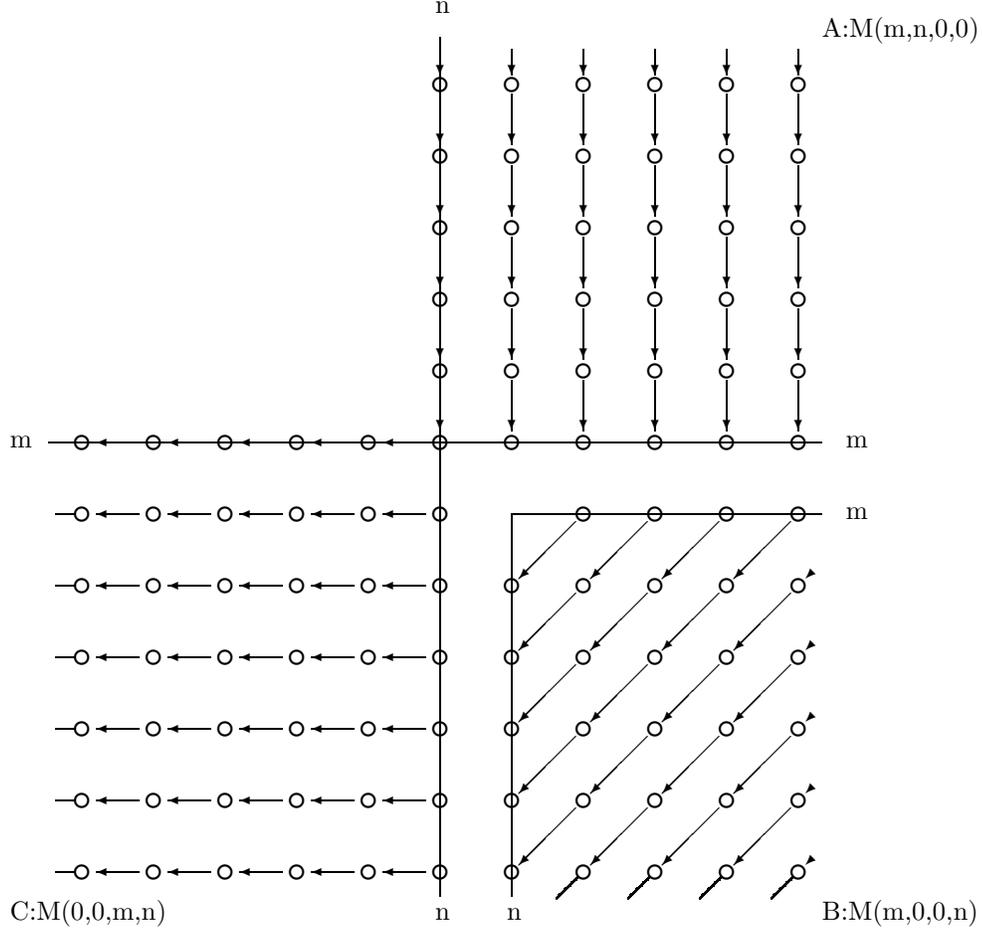
\begin{figure}[htbp]
  \begin{center}
    \leavevmode
  \setlength{\unitlength}{0.25in}
\begin{picture}(21,19)


\put(18,17.5){A:M(m,n,0,0)}
\put(18,-1){B:M(m,0,0,n)}

\put(1,17.5){}
\put(1,-1){C:M(0,0,m,n)}


\put(10,17.5){\line(0,-1){8.5} }
\put(9.9,18){n}

\put(10,-.5){\line(0,1){9.5} }
\put(9.9,-1){n}

\put(11.5,-.5){\line(0,1){8} }
\put(11.4,-1){n}

\put(11.5,7.5){\line(1,0){6.5} }
\put(18.5,7.4){m}

\put(1.8,9){\line(1,0){8.2} }
\put(1,8.9){m}

\put(10,9){\line(1,0){8} }
\put(18.5,8.9){m}


\thicklines
\multiput(2.5,0)(0,1.5){7}{\circle{.25} }
\multiput(4,0)(0,1.5){7}{\circle{.25} }  
\multiput(5.5,0)(0,1.5){7}{\circle{.25} }
\multiput(7,0)(0,1.5){7}{\circle{.25} }  
\multiput(8.5,0)(0,1.5){7}{\circle{.25} }

\multiput(10,0)(0,1.5){12}{\circle{.25} }  
\multiput(11.5,0)(0,1.5){5}{\circle{.25} }
\multiput(11.5,9)(0,1.5){6}{\circle{.25}} 
\multiput(13,0)(0,1.5){12}{\circle{.25} }  
\multiput(14.5,0)(0,1.5){12}{\circle{.25} }
\multiput(16,0)(0,1.5){12}{\circle{.25} }  
\multiput(17.5,0)(0,1.5){12}{\circle{.25} }

\thinlines

\drawline(1.95,0  )(2.35,0  )
\drawline(1.95,1.5)(2.35,1.5)
\drawline(1.95,3  )(2.35,3  )
\drawline(1.95,4.5)(2.35,4.5)
\drawline(1.95,6  )(2.35,6  )
\drawline(1.95,7.5)(2.35,7.5)
\drawline(1.95,9  )(2.35,9  )


\drawline(12.45,-.55)(12.9,-.1)
\drawline(13.95,-.55)(14.4,-.1)
\drawline(15.45,-.55)(15.9,-.1)
\drawline(16.95,-.55)(17.4,-.1)

\multiput(18.2,.7)(0,1.5){5}{\vector(-1,-1){.55}}
\multiput(10,17.25)(1.5,0){6}{\vector(0,-1){.55}}


\multiput(10,16.3)(1.5,0){6}{\vector(0,-1){1.05}}
\multiput(10,14.8)(1.5,0){6}{\vector(0,-1){1.05}}
\multiput(10,13.3)(1.5,0){6}{\vector(0,-1){1.05}}
\multiput(10,11.8)(1.5,0){6}{\vector(0,-1){1.05}}
\multiput(10,10.3)(1.5,0){6}{\vector(0,-1){1.05}}

\multiput(3.7,9  )(1.5,0){5}{\vector(-1,0){0.9}}
\multiput(3.7,7.5)(1.5,0){5}{\vector(-1,0){0.9}}
\multiput(3.7,6  )(1.5,0){5}{\vector(-1,0){0.9}}
\multiput(3.7,4.5)(1.5,0){5}{\vector(-1,0){0.9}}
\multiput(3.7,3  )(1.5,0){5}{\vector(-1,0){0.9}}
\multiput(3.7,1.5)(1.5,0){5}{\vector(-1,0){0.9}}
\multiput(3.7,0  )(1.5,0){5}{\vector(-1,0){0.9}}

\multiput(12.85,7.35)(1.5,0){4}{\vector(-1,-1){1.2}}
\multiput(12.85,5.85)(1.5,0){4}{\vector(-1,-1){1.2}}
\multiput(12.85,4.35)(1.5,0){4}{\vector(-1,-1){1.2}}
\multiput(12.85,2.85)(1.5,0){4}{\vector(-1,-1){1.2}}
\multiput(12.85,1.35)(1.5,0){4}{\vector(-1,-1){1.2}}
\end{picture}
\vspace{3ex}
  \caption{Morphisms of degree 1.}
    \label{fig:2}
  \end{center}
\end{figure}
 The nodes in the quadrants $A$, $B$ and $C$
represent generalized Verma modules $M(m,n,0,0)$, $M(0,0,m,n)$
and $M(m,0,0,n)$, respectively.  The arrows represent the
$E(5,10)$-morphisms $\triangledown_X$, $X = A$, $B$ or $C$ in the
respective quadrants.

\medskip

\bPr
$(\triangledown_X)^2=0 \,\, (X=A,B \hbox{ or } C)$.
\ePr
It is not difficult to check the following lemma.
\begin{Lemma}
  \label{lem:4.2}
Equation $(\triangledown_X)^2 =0$  for $\triangledown_X \in \End_L(M(V_X))$ is
equivalent to the system of equations :
\begin{displaymath}
  \theta_{ab}^X \theta_{cd}^X-\theta_{ac}^X\theta_{bd}^X
  + \theta_{ad}^X\theta_{bc}^X =0 \, \text{ for $a,b,c,d=1,\ldots ,5$}.
\end{displaymath}
\end{Lemma}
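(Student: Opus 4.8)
The plan is to compute the restriction of the composite $(\triangledown_X)^2$ and read off exactly when it vanishes. Since $\triangledown_X$ is an $L$-module morphism it commutes with the action of $U_-$; writing $\triangledown_X(1\otimes s)=\sum_{k,l}d_{kl}\cdot(1\otimes\theta^X_{kl}(s))$ via rule~(\ref{eq:1.2}) and applying $\triangledown_X$ once more, one obtains
\[
(\triangledown_X)^2(1\otimes s)=\sum_{i,j,k,l}(d_{kl}\,d_{ij})\otimes\theta^X_{ij}\theta^X_{kl}(s)\in U_-\otimes V_X .
\]
By Proposition~\ref{prop:1.1} the morphism $(\triangledown_X)^2$ is zero iff this element vanishes for all $s$, i.e. iff the coefficient of every PBW basis vector of $U_-$ vanishes in $\End(V_X)$. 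Since $d_{kl},d_{ij}\in\fg_{-1}$ are odd, PBW splits the degree-two part of $U_-$ into the two linearly independent subspaces $\fg_{-2}=\langle\partial_e\rangle$ and $\Lambda^2\fg_{-1}$, according to
\[
d_{kl}\,d_{ij}=\tfrac12[d_{kl},d_{ij}]+d_{kl}\wedge d_{ij}=\tfrac12\sum_e\epsilon_{eklij}\,\partial_e+d_{kl}\wedge d_{ij},
\]
using the bracket relation of Section~\ref{sec:2}. Independence of these subspaces lets me split the vanishing of $(\triangledown_X)^2$ into a ``$\partial$-part'' and a ``wedge-part'' condition, treated separately.

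For the wedge part, the antisymmetry $d_{ij}\wedge d_{kl}=-d_{kl}\wedge d_{ij}$ lets me relabel the summation indices and antisymmetrize the operator factors, giving
\[
\sum_{i,j,k,l}(d_{kl}\wedge d_{ij})\otimes\theta^X_{ij}\theta^X_{kl}
=\tfrac12\sum_{i,j,k,l}(d_{kl}\wedge d_{ij})\otimes[\theta^X_{ij},\theta^X_{kl}].
\]
Here I would observe that for each $X$ the operators $\theta^X_{ij}$ pairwise commute: for $X=A$ they are the commuting derivatives $d/dx_{ij}$, for $X=C$ the commuting multiplications by $x^*_{ij}$, and for $X=B$ they are vector fields $z^*_i\,d/dz_j-z^*_j\,d/dz_i$ whose coefficients involve only the $z^*$-variables and hence commute with one another. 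Thus the wedge part vanishes identically and imposes no condition.

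For the $\partial$-part I collect, for each fixed $e$, the coefficient of $\partial_e$. Only quadruples with $\{i,j\}$ and $\{k,l\}$ disjoint and both avoiding $e$ contribute, so the four free indices run over $\{a,b,c,d\}=\{1,\dots,5\}\setminus\{e\}$, grouped into the three perfect matchings $\{ab,cd\}$, $\{ac,bd\}$, $\{ad,bc\}$; each matching occurs in two orders of composition. Using the commutativity just established to merge the two orders, together with the sign identities $\epsilon_{eacbd}=-\epsilon_{eabcd}$ and $\epsilon_{eadbc}=+\epsilon_{eabcd}$, the coefficient of $\partial_e$ becomes a nonzero scalar multiple of $\epsilon_{eabcd}\bigl(\theta^X_{ab}\theta^X_{cd}-\theta^X_{ac}\theta^X_{bd}+\theta^X_{ad}\theta^X_{bc}\bigr)$. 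Hence $(\triangledown_X)^2=0$ holds iff this Plücker-type combination vanishes for every choice of four distinct indices, which is precisely the asserted system of equations.

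I expect the main obstacle to be the sign bookkeeping in the last step — confirming that the three matchings enter the $\partial_e$-coefficient with the signs $+,-,+$ of the Plücker relation — together with the easy but essential verification that the $\theta^X_{ij}$ commute, since it is this commutativity that makes the wedge part drop out and that lets the two composition orders $\theta^X_{ab}\theta^X_{cd}$ and $\theta^X_{cd}\theta^X_{ab}$ fuse into a single Plücker expression.
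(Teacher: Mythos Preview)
Your proposal is correct. The paper itself gives no proof of this lemma beyond the remark ``It is not difficult to check,'' so your computation supplies exactly the omitted verification. The strategy---expanding $(\triangledown_X)^2$ via rule~\eqref{eq:1.2}, splitting the degree-two piece of $U_-$ as $\fg_{-2}\oplus \Lambda^2\fg_{-1}$ through PBW, and reading off the coefficient of each $\partial_e$---is the natural one and almost certainly what the author had in mind.

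Two small remarks on execution. First, your treatment of the wedge part relies on the commutativity of the $\theta^X_{ij}$, which you check case by case for $X=A,B,C$; this is legitimate since the lemma is stated only for these specific operators, but it is worth noting that the equivalence is not being claimed for an arbitrary expression of the form $\sum d_{ij}\otimes\theta_{ij}$. Second, your hedging about the sign bookkeeping is unnecessary: the identities $\epsilon_{eacbd}=-\epsilon_{eabcd}$ and $\epsilon_{eadbc}=+\epsilon_{eabcd}$ are immediate (a single transposition and a $3$-cycle, respectively), and once you use commutativity to combine the two composition orders for each matching, the coefficient of $\partial_e$ is exactly $4\,\epsilon_{eabcd}\bigl(\theta^X_{ab}\theta^X_{cd}-\theta^X_{ac}\theta^X_{bd}+\theta^X_{ad}\theta^X_{bc}\bigr)$, so the Pl\"ucker signs fall out with no ambiguity.
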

For $X=A$ the equations follow from Proposition~\ref{prop:4.1a}, for $X=C$
the equations follow immediately from Proposition~\ref{prop:4.1c}. In the case
$X=B$ denote
\begin{displaymath}
[ab/cd]=x_a^* x_b^*\frac{d}{dx_{c}}\frac{d}{dx_{d}}\,.
\end{displaymath}
Then we have
\begin{displaymath}
\theta_{ab}^B\theta_{cd}^B = [ac/bd]-[ad/bc]-[bc/ad]+[bd/ac]\,,
\end{displaymath}
and similarly
\begin{displaymath}
\begin{aligned}
\theta_{ac}^B\theta_{bd}^B = [ab/cd]-[ad/bc]-[bc/ad]+[cd/ab]\,,\\
\theta_{ab}^B\theta_{cd}^B = [ab/cd]-[ac/bd]-[bd/ac]+[cd/ab]\,.
\end{aligned}
\end{displaymath}
The equation follows.

\begin{Theorem}
  \label{th:4.2}
All the morphisms of degree 1 are the following:\smallskip
\\ \smallskip
$(a)$~~$\triangledown_A:M(n_1,n_2,0,0) \to M(n_1,n_2-1,0,0)$ \quad
for $n_1\geq 0$, $n_2>0$;\\ \smallskip
$(b)$~~$\triangledown_B:M(n_1,0,0,n_4)\to M(n_1-1,0,0,n_4+1)$
for $n_1> 0$, $n_4\geq 0$; \\ \smallskip
$(c)$~~$\triangledown_C:M(0,0,n_3,n_4)\to M(0,0,n_3,n_4+1)$ \quad
for $n_3\geq 0$, $n_4\geq 0$.
\end{Theorem}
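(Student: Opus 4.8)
The plan is to reformulate a morphism of degree $1$ as a singular vector at natural degree $1$ in the target module and then to classify such vectors by a finite computation in the representation theory of $\fg_0\simeq s\ell_5$. By Proposition~\ref{prop:1.1} a morphism $\varphi\colon M(A)\to M(B)$ of degree $1$ is the same as a non-zero $\Phi=\varphi|_A\in\fg_{-1}\otimes_{\CC}\Hom(A,B)$ with $\fg_0\cdot\Phi=0$ and $\fg_1\cdot\Phi(a)=0$ for all $a\in A$ (it suffices to use $\fg_1$ rather than all of $L_+$, because $\Phi(a)$ has degree $-1$ while $M(B)$ vanishes in positive degrees). The condition $\fg_0\cdot\Phi=0$ says exactly that $\Phi$ is a morphism of $\fg_0$-modules $A\to\fg_{-1}\otimes B=\Lambda^2\CC^5\otimes B$, so $\Phi$ is determined by $w:=\Phi(a^+)$, a $\fg_0$-highest weight vector of weight $\Lambda_A$ (the highest weight of $A$) lying in $\fg_{-1}\otimes B\subset M(B)$; conversely such a $w$ extends uniquely to a $\fg_0$-morphism $\Phi$, and since $\Psi(v\otimes a):=v\cdot\Phi(a)$ defines a $\fg_0$-morphism $\fg_1\otimes A\to B$ with $\fg_1\otimes A=U(\fg_0)(\fg_1\otimes a^+)$, the full condition $\fg_1\cdot\Phi(a)=0$ already follows from $\fg_1\cdot w=0$. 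Thus the theorem is equivalent to: as $B$ ranges over all finite-dimensional irreducible $\fg_0$-modules, the only $\fg_0$-highest weight vectors $w\in\fg_{-1}\otimes B$ annihilated by $\fg_1$ are the ones giving rise to the morphisms $\triangledown_A$, $\triangledown_B$, $\triangledown_C$.

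The next step is to reduce $\fg_1\cdot w=0$ to one equation. Up to a scalar $x_5d_{45}$ is the lowest weight vector of the irreducible $\fg_0$-module $\fg_1$ (this is why $\fg_0$ together with $x_5d_{45}$ generates $L_0$, the fact behind the use of Remark~\ref{rem:1.2} in the proof of Theorem~\ref{th:4.1}); and if $w$ is $\fg_0$-highest then $(x_5d_{45})\cdot w=0$ already forces $\fg_1\cdot w=0$, since writing $\fg_1=U(\fn_0^+)(x_5d_{45})$ one has $[x,x_5d_{45}]\cdot w=x\cdot\bigl((x_5d_{45})\cdot w\bigr)-(x_5d_{45})\cdot(x\cdot w)=0$ for $x\in\fn_0^+$ (because $x\cdot w=0$), and an induction on word length finishes this. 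Writing $w=\sum_{i<j}d_{ij}\otimes b_{ij}$ with $b_{ij}\in B$ and using $[fd_{jk},gd_{\ell m}]=\epsilon_{ijk\ell m}fg\,\partial_i$, one finds $[x_5d_{45},d_{ij}]=\pm x_5\partial_k$, non-zero only for $\{i,j\}\subset\{1,2,3\}$ with $k$ the remaining index; since $x_5\partial_k\in\fg_0$ acts on $B$, exactly as in~\eqref{eq:4.2} this gives
\[
 (x_5d_{45})\cdot w=(x_5\partial_3)\,b_{12}-(x_5\partial_2)\,b_{13}+(x_5\partial_1)\,b_{23}\ \in\ B ,
\]
and $w$ is singular if and only if this element of $B$ vanishes.

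Finally I would enumerate the candidates. For fixed $B=F(m_1,m_2,m_3,m_4)$, the Littlewood--Richardson rule (tensoring with $\Lambda^2\CC^5$ adds a vertical two-box strip to a Young diagram) shows $\fg_{-1}\otimes B$ is multiplicity free, with irreducible components $F(\Lambda')$ indexed by the finitely many such strips; for each one writes the highest weight vector $w$ explicitly as a combination of the $d_{ij}\otimes(\text{weight vectors of }B)$, reads off $b_{12},b_{13},b_{23}$, and tests the displayed identity. Using a concrete model of $B$ (for instance the realizations of $S_{A,\high}$, $S_{B,\high}$, $S_{C,\high}$ from Propositions~\ref{prop:4.1a}--\ref{prop:4.1c}, or $B$ inside symmetric powers of $\CC^5$ and $\CC^{5*}$), one checks that $(x_5\partial_3)b_{12}-(x_5\partial_2)b_{13}+(x_5\partial_1)b_{23}$ vanishes precisely when $B$ has one of the shapes $(m_1,m_2,0,0)$, $(m_1,0,0,m_4)$, $(0,0,m_3,m_4)$ and $\Lambda'$ is the corresponding source weight, i.e.\ precisely for the pairs giving the morphisms $\triangledown_A$, $\triangledown_B$, $\triangledown_C$ of Theorem~\ref{th:4.1}, and is non-zero in every other case. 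Together with Theorem~\ref{th:4.1}, which supplies these morphisms and records exactly when they are zero (hence the ranges stated for $n_1,n_2,n_3,n_4$), this proves that $\triangledown_A$, $\triangledown_B$, $\triangledown_C$ exhaust the morphisms of degree $1$.

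The hard part is the last step. The list of components $F(\Lambda')$ to be examined is finite and short, but for each one coming from a $B$ outside the three special shapes one must \emph{certify} that $(x_5\partial_3)b_{12}-(x_5\partial_2)b_{13}+(x_5\partial_1)b_{23}\neq0$, which means tracking the Clebsch--Gordan coefficients of $w$ precisely and exhibiting an explicit weight vector of $B$ on which the expression does not vanish; for the surviving components the vanishing is exactly the computation already performed in the proof of Theorem~\ref{th:4.1}. A secondary difficulty is the bookkeeping of highest weights through the dualities between $S_A$ and $S_C$ and between $\CC^5$ and $\CC^{5*}$, so that the three families attach to the correct labels $(n_1,n_2,n_3,n_4)$.
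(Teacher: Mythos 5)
Your reductions are all correct and they reproduce the skeleton of the paper's own argument: a degree-1 morphism is a $\fg_0$-invariant $\Phi=\sum_{i,j} d_{ij}\otimes\theta_{ij}$, and the $L_+$-condition, tested on the highest weight vector $s=a^+$ of the source, comes down to the vanishing of $(x_5\partial_3)\theta_{12}(s)-(x_5\partial_2)\theta_{13}(s)+(x_5\partial_1)\theta_{23}(s)$ and its $\fg_0$-translates --- compare \eqref{eq:4.2} and \eqref{eq:4.5}. Your observation that annihilation by the single lowest weight vector $x_5d_{45}$ of $\fg_1$ already forces annihilation by all of $\fg_1$ (because $w=\Phi(a^+)$ is a $\fg_0$-highest vector) is correct and is a slightly cleaner packaging than the paper's, which writes down the condition for the whole family $t=x_qd_{pq}$ at once. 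Where you diverge is the endgame. You propose to enumerate, via Littlewood--Richardson, the components $F(\Lambda')$ of $\Lambda^2\CC^5\otimes B$ for every $B$, write each highest weight vector explicitly, and test the displayed identity; you correctly flag this as the hard part, and it is left entirely unexecuted --- as written, ``one checks that \dots\ vanishes precisely when \dots'' assumes the conclusion. The paper avoids this explicit Clebsch--Gordan bookkeeping: it applies raising operators $x_c\partial_q$ to the relations \eqref{eq:4.5}, obtaining identities \eqref{eq:4.6} of the shape $h_{cq}\,\theta_{ab}(s)=(\text{terms that vanish on }s)$, and then eliminates the $\theta_{ij}(s)$ one at a time ($\theta_{45}$, then $\theta_{35}$, $\theta_{25}$, $\theta_{15}$, then $\theta_{ab}$ with $1<a,b<5$, then $\theta_{1b}$, finally $\theta_{12}$): each non-zero $\theta_{ij}(s)$ is shown to be a highest weight vector whose weight is forced to have a specific form, and dominance of $\wt(s)$ then kills all cases except the three families giving $\triangledown_A$, $\triangledown_B$, $\triangledown_C$. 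That weight-plus-dominance elimination is the device your plan is missing; if you carry out your enumeration honestly it should reach the same answer, but the paper's route does so with far less explicit computation, and until one of the two computations is actually performed the classification is not proved.
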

\begin{remark} In short these are the morphisms represented
in Figure~\ref{fig:2}, i.e. all the non-zero morphisms $\triangledown_X$.
\end{remark}
\prf\/
A morphism $\ph : M(\bar{n}')\to M(\bar{n}'')$ of degree 1 is defined
according to Proposition~\ref{prop:1.1} by $\Phi$ of the form
\begin{equation}\label{eq:4.3}
  \Phi = \sum^5_{i,j=1} d_{ij} \otimes \theta_{ij}
  \, ,
\end{equation}
Because of \eqref{eq:1.4a} the basis $\{\theta_{ij}\}$ is dual to
$\{d_{ij}\}$ and that means
\begin{equation}\label{eq:4.4}
  (x_a\partial_b)\, \theta_{ij}= - \de_{ai}\theta_{bj}-\de_{aj}\theta_{ib}
  \, ,
\end{equation}
The equation \eqref{eq:1.4b} where
we consider $t= x_qd_{pq}$ gives (the calculations are similar
to those in \eqref{eq:4.2} ):
\bEz
\begin{aligned}
(x_qd_{pq})&\sum^5_{i,j=1} d_{ij} \otimes \theta_{ij}(s) =\\
&= (x_q\partial_c)\otimes \theta_{ab}(s)-
(x_q\partial_b)\otimes \theta_{ac}(s)+
(x_q\partial_a)\otimes \theta_{bc}(s)=0\,,
\end{aligned}
\eEz
or
\bEq\label{eq:4.5}
(x_q\partial_c) \theta_{ab}(s)+
(x_q\partial_b) \theta_{ca}(s)+
(x_q\partial_a) \theta_{bc}(s)=0\,,
\eEq
for different $a,b,c,q$.

Choose $s$ to be the highest weight vector in $F(\bar{n}')$. \\
Suppose first that $\theta_{45}(s)\neq 0$  and $ a,q \leq 3$, $b=4,\,c=5$.
Then \eqref{eq:4.5} and \eqref{eq:4.4} imply
\bEz
(x_q\partial_a)\otimes \theta_{45}(s)= -\tht_{a4}(x_q\partial_5)s
                                         -\tht_{5a}(x_q\partial_4)s =0\,.
\eEz
Therefore the weight $\wt(\theta_{45}(s))$ has the form $(0,0,m,n)$ for some
$m,n\geq 0$.
We conclude that
\bEz
\bar{n}'=\wt(s)=(0,0,m+1,n)\,.
\eEz
This clearly implies $\ph=\triangledown_C$.

We assume $\theta_{45}(s)= 0$ for the following.\\
If we apply $x_c\partial_q$ to \eqref{eq:4.5} we get
\bEq \label{eq:4.6}
\theta_{ab}(h_{cq}s)+
\theta_{ca}(x_c\partial_b) (s)-\theta_{qa}(x_q\partial_b) (s)+
\theta_{bc}(x_c\partial_a) (s)-\theta_{bq}(x_q\partial_a) (s)=0\,.
\eEq
For $a=3, b=5, c=2,q=4$ it becomes
\bEz
h_{24}\theta_{35}(s)=
-\theta_{23}(x_2\partial_5) (s)+\theta_{43}(x_4\partial_5) (s)-
\theta_{52}(x_2\partial_3) (s)+\theta_{54}(x_4\partial_3) (s)=
(x_4\partial_3)\theta_{54} (s)=0
\,.
\eEz
Suppose that $\theta_{35}(s)\neq 0$.
Then $\theta_{35}(s)$ is a highest weight vector and its weight has
the form $\wt(\theta_{35})(s)=(m,0,0,n)$ for some $m,n\neq 0$.
But when $\wt(s)=(m,0,0,n)-(0,1,-1,1)$ is not dominant.

So we have $\theta_{35}(s)=\theta_{45}(s)=0$.\\
If $\tht_{25}(s)\neq 0$, then it is a highest weight vector and
similarly we may use \eqref{eq:4.6}
for $a=2$, $b=5$, $c=1$, $q=4$. We come to $h_{14}\theta_{25}(s)=0$, which
gives an impossible weight for $\tht_{25}(s)$. Thus $\theta_{25}(s)=0$.

Substitute $a=1,b=5, 1<c,q<5$ in \eqref{eq:4.6}
\bEz
h_{cq}\theta_{15}(s)=
-\theta_{1c}(x_c\partial_5) (s)+\theta_{1q}(x_q\partial_5) (s)+
(x_c\partial_1)\theta_{c5} (s)-(x_q\partial_1)\theta_{q5} (s)= 0
\,.
\eEz
We see that if $\theta_{15}(s)\neq 0$, then it is the
highest weight vector with the weight $\wt(\theta_{15}(s))=(m,0,0,n)$.
It is easy to conclude that $\ph=\triangledown_B$ in this case.

We have all $\tht_{i5}=0$. Let $1<a,b<5$. Again \eqref{eq:4.6} gives
\bEz
h_{15}\tht_{ab}(s)=0\,.
\eEz
It follows that non of the vectors $ \tht_{ab}(s)$ can be the hight weight
vector, they all are to be zero. Only $\tht_{1j}(s)$ for $j=2,3,4$ could be
non-zero.

For $b=3,4$ we have
\bEz
h_{25}\theta_{1b}(s)=
\theta_{12}(x_2\partial_b) (s)-(x_5\partial_b)\theta_{15} (s)-
(x_2\partial_1)\theta_{2b} (s)+(x_5\partial_1)\theta_{b5} (s)= 0
\,.
\eEz
So $\wt(\theta_{1b}(s))$ should be $(m,0,0,0)$ and this is not possible.
The only choice is $\theta_{1b}(s)=0$ for $b=3,4$.

We are left with the case when all $\tht_{ab}(s)=0$ except $\tht_{12}(s)$, which
is the highest weight vector. Then similarly we get $h_{35}\tht_{12}(s)=0$ and
therefore $\wt(\tht_{12}(s))=(m,n,0,0)$. Then
$\ph=\triangledown_A$. \epf


\section{Morphisms of degree 4.}
\label{sec:5}

Our goal is to construct a morphism of degree 4.

Let us start with the formula
\bEq\label{eq:5.1}
t=\sum u_{\bar{a}}\otimes \hat{z}^{\bar{a}}\,,
\eEq
where $\{\hat{z}^{\bar{a}}\}$ is the monomial basis of $\Sym^3({\CC^5}^*)$ and
$\{u_{\bar{a}}\}$ is the dual basis of the irreducible \\
$s\ell_5$-submodule
$\Sym^{3}({\CC}^{5})$ in $U_-$
with the highest vector $d_{12}d_{13}d_{14}d_{15}$.
This implies
\[
u_{(30000)}=d_{12}d_{13}d_{14}d_{15}\,.
\]
It is more convenient to write the multi-index in the "multiplicative" form,
$[1^3]$ instead of $(30000)$ and so on. Then
\bEq\label{eq:5.2}
u_{[1^2 2]}=u_{(21000)}=d_{12}d_{23}d_{14}d_{15}+
            d_{12}d_{13}d_{24}d_{15}+
               d_{12}d_{13}d_{14}d_{25}\,.
\eEq
The expressions of this type are not unique, for example
\bEz
\begin{aligned}
u_{[2^3]}&=d_{21}d_{23}d_{24}d_{25}=d_{12}d_{23}d_{24}d_{25}\,,\\
u_{[3^3]}&=d_{31}d_{32}d_{34}d_{35}=d_{13}d_{23}d_{34}d_{35}\,,\\
u_{[4^5]}&=d_{41}d_{42}d_{43}d_{45}=d_{14}d_{24}d_{34}d_{45}\,,
\end{aligned}
\eEz
but each $u_{\bar{a}}$ can be written as a sum of monomials in $d_{ij}$
of degree 4.

\bTh\label{th:deg4}
$(a)$ For each $n_1\geq 3$ there exist a morphism of degree $4$
\[
t_{AB}:M(n_1,0,0,0)\to M(n_1-3,0,0,0) \,.
\]
$(b)$ For any $n_4\geq 0$ there exist a morphism of degree $4$
\[
t_{BC}:M(0,0,0,n_4)\to M(0,0,0,n_4+3) \,.
\]
\eTh
\begin{remark}\label{rem:5.1}
We may also write $t_{AB}$ as a morphism of the combined Verma module
\[
M(\CC[z_i])=\oplus M(n,0,0,0)\, \text{ in itself }
,\quad t_{AB}:M(\CC[z_i])\to M(\CC[z_i])
\]
in itself given by \eqref{eq:5.1} with $\bar{z_i}=\partial_i$.
Similarly
\[
t_{BC}:M(\CC[z_i^*])\to M(\CC[z_i^*])
\]
is a morphism
given by \eqref{eq:5.1} with $\bar{z_i}$ equal to the multiplication by $z_i^*$.
\end{remark}
\pth.\/
We see immediately that Remark~\ref{rem:5.1} determines
a $(L_- \oplus \fg_0)$-morphism of Verma modules. One has to check the
condition \eqref{eq:1.4b} where we may suppose that $a$ is the highest weight
vector of $A$. For $t_{AB}$ this means
\bEz
\begin{aligned}
x_5d_{45}\cdot t_{AB}\otimes z_1^n
&=n(n-1)(n-2)\,x_5d_{45}\cdot d_{12}d_{13}d_{14}d_{15}\otimes z_1^{n-3}\\
   &=n(n-1)(n-2)\,((x_5\partial_3)d_{13}d_{14}d_{15}
   +d_{12}(x_5\partial_2)d_{14}d_{15})\otimes z_1^{n-3}=0\,.
\end{aligned}
\eEz

The case $(b)$ amounts to a much more complicated computation.
We shall write only few hints.
Evidently the symmetric group $\Sym_5$
acts  on the indices. The following simple observation helps in the computations.
\bLe For $\pi \in \Sym_5$ we have
$ \pi\cdot u_\al = \sgn(\pi)\,u_{\pi(\al)}.$
\eLe
Clearly we have the equalities
\[
u_{[1^2 2]}= (x_2\partial_1)\,u_{[1^3]}\quad\text{ and }\quad
u_{[1 2 3]}=(x_3\partial_1)\,u_{[1^2 2]}\,,
\]
that help us to write the elements $u_{\al}$ explicitly.
We leave it to the reader to check the rest of the computation.
\epf

\bPr Whenever defined, the following compositions are zero
\begin{align}\notag
&t_{AB}\cdot \triangledown_A =0\,,
\quad\triangledown_B \cdot t_{AB}=0\,,\\ \notag
&t_{BC}\cdot \triangledown_B =0\,,
\quad\triangledown_C \cdot  t_{BC}=0\,,\\
& t_{BC}\cdot t_{AB}=0\,.
\notag
\end{align}
\ePr
The proof is more or less immediate.\\

This means that we have got complexes that are shown in the following picture.


\begin{figure}[htbp]
  \begin{center}
    \leavevmode
  \setlength{\unitlength}{0.25in}
\begin{picture}(21,19)


\put(18,17.5){A:M(m,n,0,0)}
\put(18,-1){B:M(m,0,0,n)}

\put(1,17.5){}
\put(1,-1){C:M(0,0,m,n)}


\put(10,17.5){\line(0,-1){8.5} }
\put(9.9,18){n}

\put(10,-.5){\line(0,1){9.5} }
\put(9.9,-1){n}

\put(11.5,-.5){\line(0,1){8} }
\put(11.4,-1){n}

\put(11.5,7.5){\line(1,0){6.5} }
\put(18.5,7.4){m}

\put(1.8,9){\line(1,0){8.2} }
\put(1,8.9){m}

\put(10,9){\line(1,0){8} }
\put(18.5,8.9){m}


\thicklines
\multiput(2.5,0)(0,1.5){7}{\circle{.25} }
\multiput(4,0)(0,1.5){7}{\circle{.25} }  
\multiput(5.5,0)(0,1.5){7}{\circle{.25} }
\multiput(7,0)(0,1.5){7}{\circle{.25} }  
\multiput(8.5,0)(0,1.5){7}{\circle{.25} }

\multiput(10,0)(0,1.5){12}{\circle{.25} }  
\multiput(11.5,0)(0,1.5){6}{\circle{.25} }
\multiput(11.5,9)(0,1.5){6}{\circle{.25}} 
\multiput(13,0)(0,1.5){12}{\circle{.25} }  
\multiput(14.5,0)(0,1.5){12}{\circle{.25} }
\multiput(16,0)(0,1.5){12}{\circle{.25} }  
\multiput(17.5,0)(0,1.5){12}{\circle{.25} }

\thinlines

\drawline(1.95,0  )(2.35,0  )
\drawline(1.95,1.5)(2.35,1.5)
\drawline(1.95,3  )(2.35,3  )
\drawline(1.95,4.5)(2.35,4.5)
\drawline(1.95,6  )(2.35,6  )
\drawline(1.95,7.5)(2.35,7.5)
\drawline(1.95,9  )(2.35,9  )


\drawline(12.45,-.55)(12.9,-.1)
\drawline(13.95,-.55)(14.4,-.1)
\drawline(15.45,-.55)(15.9,-.1)
\drawline(16.95,-.55)(17.4,-.1)

\multiput(18.2,.7)(0,1.5){5}{\vector(-1,-1){.55}}
\multiput(10,17.25)(1.5,0){6}{\vector(0,-1){.55}}


\multiput(10,16.3)(1.5,0){6}{\vector(0,-1){1.05}}
\multiput(10,14.8)(1.5,0){6}{\vector(0,-1){1.05}}
\multiput(10,13.3)(1.5,0){6}{\vector(0,-1){1.05}}
\multiput(10,11.8)(1.5,0){6}{\vector(0,-1){1.05}}
\multiput(10,10.3)(1.5,0){6}{\vector(0,-1){1.05}}

\multiput(14.5,8.95)(1.5,0){3}{\vector(-2,-1){2.75}}
\multiput(18.15,7.8)(1.5,0){1}{\vector(-2,-1){0.5}}
\multiput(17.9,8.5)(1.5,0){1}{\vector(-2,-1){1.73}}

\multiput(3.7,9  )(1.5,0){5}{\vector(-1,0){0.9}}
\multiput(3.7,7.5)(1.5,0){5}{\vector(-1,0){0.9}}
\multiput(3.7,6  )(1.5,0){5}{\vector(-1,0){0.9}}
\multiput(3.7,4.5)(1.5,0){5}{\vector(-1,0){0.9}}
\multiput(3.7,3  )(1.5,0){5}{\vector(-1,0){0.9}}
\multiput(3.7,1.5)(1.5,0){5}{\vector(-1,0){0.9}}
\multiput(3.7,0  )(1.5,0){5}{\vector(-1,0){0.9}}

\multiput(12.85,7.35)(1.5,0){4}{\vector(-1,-1){1.2}}
\multiput(12.85,5.85)(1.5,0){4}{\vector(-1,-1){1.2}}
\multiput(12.85,4.35)(1.5,0){4}{\vector(-1,-1){1.2}}
\multiput(12.85,2.85)(1.5,0){4}{\vector(-1,-1){1.2}}
\multiput(12.85,1.35)(1.5,0){4}{\vector(-1,-1){1.2}}

\put(11.35,7.35){\vector(-1,-2){1.3}}
\put(11.35,5.85){\vector(-1,-2){1.3}}
\put(11.35,4.35){\vector(-1,-2){1.3}}
\put(11.35,2.85){\vector(-1,-2){1.3}}
\drawline(11.35,1.35)(10.45,-.45)
\drawline(11.35,-0.15)(11.15,-.55)
\end{picture}
\vspace{3ex}
  \caption{The long complexes with morphisms of degree 1 and 4.}
    \label{fig:2}
  \end{center}
\end{figure}
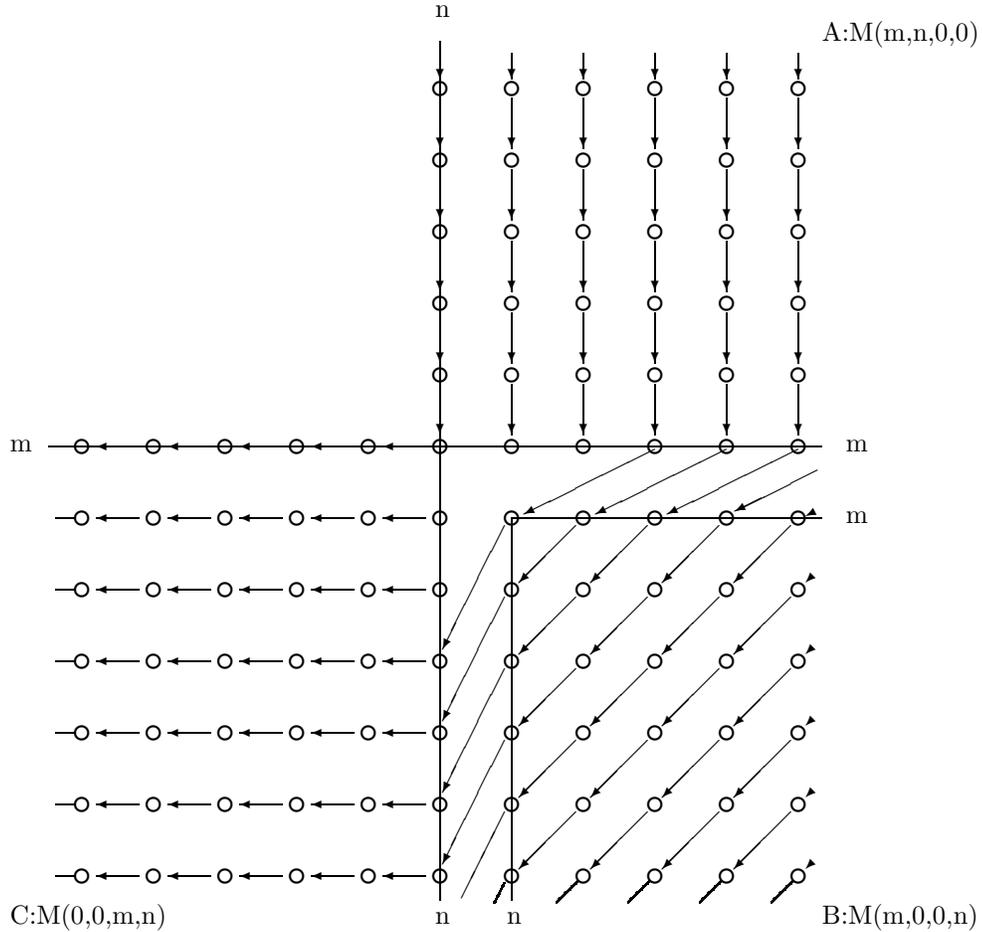


\section{Morphisms of other degrees.}
\label{sec:6}

There morphisms of degrees 2, 3 and 5 that can be constructed as compositions of
the morphisms of degree 1 and 4 described above.

\bPr\label{pr:deg2}
There are morphisms of degree $2$
\begin{align}\notag
\triangledown_{AB}&=\triangledown_B \!\cdot\! \triangledown_A :M(m,1,0,0) \to M(m-1,0,0,1)
\quad \text{ for $m>0$}\,,\\ \notag
\triangledown_{BC}&=\triangledown_C \!\cdot\! \triangledown_B
:M(1,0,0,n) \to M(0,0,1,n+1)
\quad \text{ for $n \geq 0$}\,,\\ \notag
\triangledown_{AC}&=
\triangledown_C \!\cdot\! \triangledown_A :M(1,0,0,0) \to M(0,0,0,1)\,.\notag
\end{align}
\ePr
It is not difficult to see that the morphisms are non-zero
and they are evidently morphisms of degree 2.
\bPr\label{pr:deg3}
There is a morphism of degree $3$
\[
\triangledown_{ABC}=
\triangledown_C \!\cdot\!\triangledown_B \!\cdot\! \triangledown_A
:M(1,1,0,0) \to M(0,0,1,1)\,.
\]
\ePr
It is a simple calculation to check that the morphism in question is non-zero.
\bPr\label{pr:deg5}
There are two morphisms of degree $5$
\begin{align}\notag
& t'  =
\triangledown_C \!\cdot\! t_{AB}
:M(3,0,0,0) \to M(0,0,1,0)\,\, \text{ and }&\\ \notag
&t'' =
 t_{BC} \!\cdot\!\triangledown_A
:M(0,1,0,0) \to M(0,0,0,3)\,.&
\end{align}
\ePr
Again we only need to notice that the morphisms are non-zero which amounts to
some calculation. It is tempting to believe that we have found already
all morphisms
between degenerate Verma modules.
\begin{Conjecture}
  \label{conj:6.1}
The morphisms listed in Propositions~\ref{pr:deg2},~\ref{pr:deg3},~\ref{pr:deg5}
and Theorems~\ref{th:4.2},~\ref{th:deg4} are all morphisms between degenerate (minimal) Verma
modules for $E(5,10)$, in particular there are no morphisms of degrees larger
than $5$.
\end{Conjecture}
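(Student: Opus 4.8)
The plan is to convert the statement into a classification of $\fg_0$-singular vectors inside the graded pieces $(U_-)_k$, to cut the possibilities down by $s\ell_5$-representation theory together with weight and dominance arguments, and finally to match each surviving solution with one of the morphisms already constructed. First note that every morphism $\ph\colon M(A)\to M(B)$ of minimal Verma modules is a finite sum of homogeneous ones: writing $\ph(1\otimes a)=\sum_{k\ge0}\Phi_k(a)$ with $\Phi_k(a)\in(U_-)_k\otimes B$, each $\Phi_k$ is $\fg_0$-equivariant, and since $t\otimes a=0$ in $M(A)$ for $t\in L_+$ while the $t\cdot\Phi_k(a)$ lie in pairwise distinct degrees, each $\Phi_k$ satisfies $L_+\cdot\Phi_k(a)=0$ and so, by Proposition~\ref{prop:1.1}, defines a morphism of degree $k$. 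Hence it suffices to classify homogeneous morphisms of degree $k\ge1$. By Remark~\ref{rem:1.2} with $T=\{x_5d_{45}\}$ — legitimate because $\fg_1$ generates $L_+$ (as $\fg_j=\fg_1^j$) and $x_5d_{45}$ generates $\fg_1$ as a $\fg_0$-module — such a morphism $M(F(\lambda))\to M(F(\mu))$ is exactly a $\fg_0$-module map $\Phi\colon F(\lambda)\to(U_-)_k\otimes F(\mu)$ with $(x_5d_{45})\cdot\Phi(a)=0$ for all $a$; since $\Phi$ is determined by $w:=\Phi(s)$ (with $s$ the highest weight vector of $F(\lambda)$), the problem is to find all $\fg_0$-highest weight vectors $w$ of weight $\lambda$ in $(U_-)_k\otimes F(\mu)$ for which this annihilation holds. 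Throughout, $\lambda$ and $\mu$ range over the three degenerate families $(m,n,0,0)$, $(m,0,0,n)$, $(0,0,m,n)$.

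The combinatorial input is the $s\ell_5$-module structure of $(U_-)_k$. Since $L_-=\fg_{-1}\oplus\fg_{-2}$ is two-step nilpotent with odd part $\fg_{-1}\simeq\Lambda^2\CC^5$ and even part $\fg_{-2}\simeq(\CC^5)^*$, and $\fg_0$ preserves the number of $\fg_{-1}$- and $\fg_{-2}$-factors in a PBW monomial, one has a $\fg_0$-equivariant isomorphism
\[
(U_-)_k\ \simeq\ \bigoplus_{2a+b=k}\ S^a\big((\CC^5)^*\big)\otimes\Lambda^b\big(\Lambda^2\CC^5\big)\,,\qquad 0\le b\le 10\,.
\]
Decomposing the plethysm $\Lambda^b(\Lambda^2\CC^5)$ into irreducibles (a finite, explicit list), tensoring with $S^a((\CC^5)^*)$ and with $F(\mu)$, and keeping only the $\lambda$ that are themselves degenerate, one obtains for each $k$ a list (finite once the bound below is established) of candidate pairs with $\Hom_{\fg_0}(F(\lambda),(U_-)_k\otimes F(\mu))\ne0$. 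On each such finite-dimensional Hom-space the condition $(x_5d_{45})\cdot\Phi=0$ is an explicit linear system: $x_5d_{45}$ acts on PBW monomials $\partial^{\alpha}d^{\beta}$ by a twisted Leibniz rule governed by $[x_5d_{45},\partial_5]=-d_{45}$ and $[x_5d_{45},d_{\ell m}]=\pm x_5\partial_i$ for $\{i,\ell,m\}=\{1,2,3\}$ (the computations of \eqref{eq:4.2}, now in degree $k$). Solving these systems as in the proof of Theorem~\ref{th:4.2} (the case $k=1$, where $(U_-)_1=\fg_{-1}$ is irreducible) — tracking $\wt(w)$ and using that $\lambda=\wt(s)$ must be dominant to force most components of $w$ to vanish — one expects the only nonzero solutions to be, up to scalar: $\triangledown_B\triangledown_A$, $\triangledown_C\triangledown_B$, $\triangledown_C\triangledown_A$ in degree $2$; $\triangledown_C\triangledown_B\triangledown_A$ in degree $3$; $t_{AB}$, $t_{BC}$ in degree $4$; $\triangledown_C t_{AB}$, $t_{BC}\triangledown_A$ in degree $5$. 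Uniqueness of the singular vector in each pertinent multiplicity space is what lets one identify it with the already-built morphism of Propositions~\ref{pr:deg2},~\ref{pr:deg3},~\ref{pr:deg5} and Theorem~\ref{th:deg4}.

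The decisive step, and the main obstacle, is the a priori bound $k\le5$. The $\fg_0$-invariance condition by itself is far too weak: already for $\mu=(0,0,0,n)$ the summand $b=0$ shows $(U_-)_{2a}\supseteq S^a((\CC^5)^*)=F(0,0,0,a)$, and $F(0,0,0,a)\otimes F(0,0,0,n)\supseteq F(0,0,0,a+n)$, so $\Hom_{\fg_0}\big(F(0,0,0,a+n),(U_-)_{2a}\otimes F(0,0,0,n)\big)\ne0$ for every $a$, with both highest weights degenerate. Thus the bound must be extracted from the $x_5d_{45}$-condition in a way that is uniform in $k$. The route I would take is to show that annihilation by the $\fg_0$-orbit of $x_5d_{45}$, together with the degenerate shape of $\lambda$, forces the $\fg_{-2}$-degree of $w$ — the number of $\partial$'s in its PBW expansion — to be bounded: since $x_5d_{45}$ turns a $\partial_5$ into a $d_{45}$, a $w$ carrying too many $\partial$'s produces under $\fg_1$ terms that cannot be cancelled while keeping $\wt(s)$ dominant and of the required shape; with $b\le10$ and the resulting bound on $a$ this gives $k\le5$. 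An equivalent-looking alternative is an inductive "peeling" argument: show that a degree-$k$ singular vector, projected to its top $\fg_{-2}$-part, is divisible by the symbol of one of $\triangledown_A,\triangledown_B,\triangledown_C,t_{AB},t_{BC}$, and induct on $k$. Making either argument airtight across the three degenerate families simultaneously is where the real difficulty lies; by comparison, the plethysm bookkeeping for $\Lambda^{\bullet}(\Lambda^2\CC^5)$ and the solution of the finitely many linear systems in degrees $2$–$5$ are routine if laborious.
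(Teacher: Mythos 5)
There is a genuine gap here, and it is worth being precise about what kind of gap. The statement you are asked about is stated in the paper as a \emph{Conjecture}, and the paper offers no proof of it: the introduction says explicitly that beyond degree~1 only the completeness of the degree-2 list is known to the author, and even that proof is deferred to a subsequent paper. Your proposal does not close this gap. It is a program rather than a proof: the two steps that carry all the content --- the a priori bound $k\le 5$ on the degree of a singular vector, and the uniqueness of the singular vector in each of the finitely many candidate $\Hom_{\fg_0}$-spaces in degrees $2$ through $5$ --- are exactly the points you leave unestablished. You write that ``one expects the only nonzero solutions to be'' the known compositions, and you say yourself that making the bound argument ``airtight across the three degenerate families simultaneously is where the real difficulty lies.'' That difficulty \emph{is} the conjecture; neither the ``too many $\partial$'s cannot be cancelled'' heuristic nor the proposed ``peeling'' induction is carried out, and no mechanism is given that would actually force the $\fg_{-2}$-degree of a singular vector to be bounded uniformly in $\mu$.

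That said, the scaffolding you set up is sound and consistent with how the paper treats the one case it does prove (Theorem~\ref{th:4.2}, degree $1$): the decomposition of an arbitrary morphism into homogeneous components, each of which is again a morphism because the terms $t\cdot\Phi_k(a)$ lie in distinct graded pieces; the reduction via Remark~\ref{rem:1.2} to annihilation by the single element $x_5d_{45}$; and the identification of $(U_-)_k$ with $\bigoplus_{2a+b=k}S^a\bigl((\CC^5)^*\bigr)\otimes\Lambda^b\bigl(\Lambda^2\CC^5\bigr)$ as a $\fg_0$-module. Your observation that $\fg_0$-invariance alone cannot bound $k$ (via the summands $S^a\bigl((\CC^5)^*\bigr)$) is correct and usefully pinpoints why the problem is hard. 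But as submitted, the proposal proves nothing beyond what the paper already establishes in Sections~4--6, and it should not be presented as a proof of the conjecture.
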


\vspace{6ex}

\textbf{Author addresses:}

\begin{list}{}{}

\item  NIISI RAN, section Mathematics,\\
Nahimovskij prosp.36, kor.1,\\
 Moscow 117218 RUSSIA\\
\item Dept of Mathematics, GU-VSE, \\
Moscow, RUSSIA \\
\item email:~~rrudakov@gmail.com\,

\end{list}

\end{document}